\newcommand{\E}{\mathrm{E}}
\providecommand{\abs}[1]{\lvert{#1}\rvert}
\providecommand{\asq}[1]{\abs{#1}^2}
\DeclareMathOperator{\dB}{dB}
\DeclareMathOperator{\Prob}{Pr}
\newtheorem{theorem}{Theorem}
\newtheorem{prop}{Proposition}
\newtheorem{lemma}{Lemma}
\begin{document}

\title{Completion Time Minimization and Robust Power Control in Wireless Packet Networks}
\author{Chris~T.~K.~Ng,
        Muriel~M\'{e}dard,
        Asuman~Ozdaglar%
\thanks{This material is based upon work under subcontract \#069145 and \#060786 issued by BAE Systems National Security Solutions, Inc.;
and supported by the Defense Advanced Research Projects Agency (DARPA) and the Space and Naval Warfare System Center (SPAWARSYSCEN), San Diego under Contract Nos.\ N66001-08-C-2013 and N66001-06-C-2020; and under ITMANET subcontract \#18870740-37362-C issued by Stanford University and supported by the DARPA\@.
The work of C.~Ng was supported by a Croucher Foundation Fellowship.
The material in this paper was presented in part at the IEEE International Conference on Communications, Dresden, Germany, June 2009.}%
\thanks{C.~Ng was with the Department of Electrical Engineering and Computer Science, Massachusetts Institute of Technology, Cambridge, MA 02139 USA, and is currently with Bell Labs, Alcatel-Lucent, Holmdel, NJ 07733 USA (e-mail: Chris.Ng@alcatel-lucent.com).}%
\thanks{M.~M\'{e}dard and A.~Ozdaglar are with the Department of Electrical Engineering and Computer Science, Massachusetts Institute of Technology, Cambridge, MA 02139 USA (e-mail: medard@mit.edu; asuman@mit.edu).}%
}

\maketitle
\thispagestyle{empty}

\begin{abstract}
A wireless packet network is considered in which each user transmits a stream of packets to its destination. The transmit power of each user interferes with the transmission of all other users. A convex cost function of the completion times of the user packets is minimized by optimally allocating the users' transmission power subject to their respective power constraints. At all ranges of SINR, completion time minimization can be formulated as a convex optimization problem and hence can be efficiently solved. In particular, although the feasible rate region of the wireless network is non-convex, its corresponding completion time region is shown to be convex. When channel knowledge is imperfect, robust power control is considered based on the channel fading distribution subject to outage probability constraints. The problem is shown to be convex when the fading distribution is log-concave in exponentiated channel power gains; e.g., when each user is under independent Rayleigh, Nakagami, or log-normal fading. Applying the optimization frameworks in a wireless cellular network, the average completion time is significantly reduced as compared to full power transmission.
\end{abstract}

\begin{IEEEkeywords}
Cellular networks, convex optimization, fading channels, interference management, outage probability, packet completion time, robust power control, stochastic programming.
\end{IEEEkeywords}

\section{Introduction}
\label{sec:intro}

Power control plays a crucial role in the operation of a wireless network, as it strives to provide maximum benefits to the users within the confines of available resources.
In particular, in a wireless network, each user's transmit power interferes with the transmission of all other users; therefore, power allocation has a significant impact on the quality-of-service (QoS) experienced by the network users.
The user benefits derived from a given power allocation assignment can be characterized by different performance metrics.

Traditionally, in designing and evaluating the performance of a wireless network, a commonly used metric is a utility function of the user rates.
In particular, the network throughput can be characterized by maximizing the sum of the rates of the users.
However, for many applications, throughput is not the sole relevant performance metric.
In this paper, we study a different network performance metric that is motivated by packetized data applications.
Specifically, we consider the scenario where each user transmits a stream of packets to its destination, and we wish to minimize a convex cost function of the user packet completion times.
Each packet may represent a frame of a multimedia source, which the user wishes to receive as soon as possible.
We show that the minimization of the completion times can be formulated as a convex optimization problem,
and hence the corresponding optimal power allocation can be computed efficiently.

On the other hand, the use of the traditional throughput performance metrics present several challenges in the design of wireless data networks.
It has been recognized that maximizing a concave utility function over the feasible rate region in a wireless network is not necessarily a convex optimization problem \cite{chiang07:pow_ctrl_geo_prog, qian09:mapel_wl_pow_ctrl, boche04:cvx_fea_qos_region, boche03:log_cvx_mftp, oneill08:opt_adap_mod_wnum}.
In the high signal-to-interference-plus-noise (SINR) regime, rate utility maximization may be approximately formulated as a convex optimization problem \cite{chiang07:pow_ctrl_geo_prog}.
For traditional voice telephony applications that need to maintain at least a moderate minimum rate, the high-SINR regime is often an appropriate assumption.
However, in wireless sensor networks or low-power data applications, a user may wish to transmit at arbitrarily low rates (i.e., at low SINRs) depending on the channel conditions, and the high-SINR assumption may not always be applicable.

In this paper, we show that completion time minimization and the corresponding optimal power allocation can be formulated as a convex optimization problem at all ranges of SINR\@.
Moreover, we consider imperfect channel knowledge due to channel fading, and formulate the minimization as a stochastic programming problem.
The channel gains are modeled as random variables: an outage event occurs when the channel realization falls below the transmitter's SINR target.
Robust power control is considered where each user is subject to an outage probability constraint.
We show that for a wide class of commonly used channel fading distributions, e.g., Rayleigh, Nakagami, and log-normal, robust power control can be posed as a convex optimization problem.
We apply the optimization frameworks in the setting of a wireless cellular network, and show that optimizing transmission power can significantly reduce the average completion time as compared to full power transmission.

Optimal power control in wireless networks is studied in \cite{chiang07:pow_ctrl_geo_prog}; it shows that maximizing concave rate utility functions can be formulated as geometric programming (GP) problems in the high-SINR regime, which are convex and hence their solutions can be computed efficiently.
In the medium- to low-SINR regime, \cite{tan05:noncvx_pow_low_sir} describes an iterative approximation method to maximize a concave rate utility function by solving a series of GPs.
For weighted throughput maximization,
\cite{qian09:mapel_wl_pow_ctrl} proposes an algorithm to globally maximize a linear function of the rates by bounding the feasible SINR region by a series of polyblocks.
In \cite{boche04:cvx_fea_qos_region}, sufficient conditions are presented for the convexity of the feasible QoS region, with optional constraints on the allocation of user power.
The log-convexity of the SINR feasible region is characterized in \cite{sung02:log_cvx_sir_regn, boche03:log_cvx_mftp}.
Utility maximization through joint optimization of adaptive modulation, rate allocation, and power control is investigated in \cite{oneill08:opt_adap_mod_wnum}.
In \cite{yu02:dist_muser_pow_dsl}, power control is studied in frequency-selective Gaussian interference channels.
Outage probabilities corresponding to different fading distributions for network users and interferers are derived in \cite{yao92:coch_intf_cell}.
For interference-limited wireless networks, optimal power control is considered in \cite{kandukuri02:intf_fading_outage_prob} under Rayleigh fading subject to outage probability constraints.

The rest of the paper is organized as follows.
Section~\ref{sec:sys_mod} describes the channel model and the optimization framework.
In Section~\ref{sec:pkt_ctm}, minimizing a cost function of the completion times is posed as a convex optimization problem.
Section~\ref{sec:ro_pow_ctrl} considers robust power control against imperfect channel knowledge subject to outage probability constraints.
Numerical examples of minimizing completion times in a wireless cellular network are presented in Section~\ref{sec:wl_cell_net}.
Section~\ref{sec:conclu} concludes the paper.

\paragraph*{Notation}
In this paper, $\mathds{R}$ ($\mathds{R}_+, \mathds{R}_{++}$) is the set of real (nonnegative, positive) numbers, $\mathds{C}$ is the complex field,
and the dimensions of the corresponding vectors/matrices are indicated by superscripts.
$A^T$ is the transpose of a matrix $A$, $\mathbf{1}$ is a vector of $1$'s, $\E[\,\cdot\,]$ denotes expectation, and $\Prob\{\,\cdot\,\}$ denotes the probability of an event.

\section{System Model}
\label{sec:sys_mod}

\subsection{Wireless Channels}

Consider the scenario in which $M$ users are communicating in a wireless packet network.
Each User~$i$ consists of a Transmitter~$i$ and a corresponding Receiver~$i$, where $i=1,\dotsc,M$.
Transmitter~$i$ wishes to send a stream of equal-length packets to Receiver~$i$, where each packet has $L_i$ bits.
We assume a narrow-band complex additive white Gaussian noise (AWGN) channel model between the transmitters and receivers
\begin{align}
\label{eq:Yi_Gij_Xj_Zi}
Y_i &= \sum_{j=1}^M H_{ij} X_j + Z_i, \quad i = 1,\dotsc,M
\end{align}
where $Y_i\in\mathds{C}$ is the observed signal at Receiver~$i$, $X_j\in\mathds{C}$ is the signal sent by Transmitter~$j$, $H_{ij}\in\mathds{C}$ is the complex baseband channel from Transmitter~$j$ to Receiver~$i$, and $Z_i\in\mathds{C}$ is independent zero-mean circularly symmetric complex Gaussian (ZMCSCG) noise with power $N_i$.
In the subsequent sections, we consider different channel knowledge assumptions where $H_{ij}$'s may represent known constants or random variables.
Suppose Transmitter~$i$ has transmit power constraint $\bar{P}_i$.
When Transmitter~$i$ transmits at a power level of $P_i \leq \bar{P}_i$, in the capacity limit, the transmission rate $R_i$ achieved by User~$i$ is given by
\begin{align}
\label{eq:Ri_log_Si}
R_i &= B\log(1+S_i)
\end{align}
where $\log$ is base 2, $B$ is the channel bandwidth, and $S_i$ is the signal-to-interference-plus-noise ratio (SINR) at Receiver~$i$.
In this paper, interference cancellation schemes are not considered.
When the interference from the other transmitters is treated as noise, the SINR at Receiver~$i$ is
\begin{align}
\label{eq:Si_Hij_Pi_SINR}
S_i &= \frac{\asq{H_{ii}}P_i}{N_i + \sum_{j\neq i} \asq{H_{ij}}P_j}.
\end{align}
We consider a full buffer traffic model where each user has an infinite backlog of packets to be sent at the transmitter.
We assume $L_i$ is sufficiently large to allow transmission at near channel capacity using Gaussian signals.
The completion time of the transmission of each of User~$i$'s packet is given by
\begin{align}
\label{eq:Ti_Li_R_i}
T_i = L_i/R_i.
\end{align}
For example, an AWGN wireless packet network with $M=2$ users is illustrated in Fig.~\ref{fig:awgn_pkt_net_2}.

\begin{figure}
  \centering
  \psfrag{X1}[r][r]{\small $X_1$}
  \psfrag{X2}[r][r]{\small $X_2$}
  \psfrag{Y1}[l][l]{\small $Y_1$}
  \psfrag{Y2}[l][l]{\small $Y_2$}
  \psfrag{Z1}[][]{\small $Z_1$}
  \psfrag{Z2}[][]{\small $Z_2$}
  \psfrag{R1}[][]{\small $R_1$}
  \psfrag{R2}[][]{\small $R_2$}
  \psfrag{P1}[][]{\small $P_1$}
  \psfrag{P2}[][]{\small $P_2$}
  \psfrag{L1}[][]{\small $L_1$}
  \psfrag{L2}[][]{\small $L_2$}
  \psfrag{T1}[][]{\small $T_1$}
  \psfrag{T2}[][]{\small $T_2$}
  \psfrag{H11}[][]{\small $H_{11}$}
  \psfrag{H12}[][]{\small $H_{12}$}
  \psfrag{H21}[][]{\small $H_{21}$}
  \psfrag{H22}[][]{\small $H_{22}$}
  \includegraphics{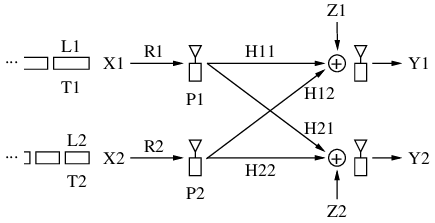}
  \caption{An AWGN wireless packet network with $M=2$ users.}
  \label{fig:awgn_pkt_net_2}
\end{figure}

\subsection{Completion Time Cost Function}

We consider the problem of minimizing a convex cost function of the completion times $T_1,\dots,T_M$, by optimally choosing the users' transmission power subject to the power constraints: $0 \leq P_i \leq \bar{P}_i$, $i=1,\dotsc,M$.
Let $J(\mathbf{T})$ be the completion time cost function, where
$\mathbf{T} \triangleq [T_1 \,\dots\, T_M]^T$,
and other vectors are denoted similarly in this paper.
We assume $J(\mathbf{T})$ is jointly-convex in $T_1,\dots,T_M$; the convexity penalizes overlong completion times.
For example, the following completion time cost functions are convex \cite{boyd04:convex_opt}:
\begin{align}
\label{eq:J_r_T}
J_r(\mathbf{T}) &= \mathbf{T}_{[1]} + \dotsb + \mathbf{T}_{[r]}\\
\label{eq:J_p_T}
J_p(\mathbf{T}) &= \bigl((T_1)^p + \dotsb + (T_M)^p\bigr)^{1/p}, \quad p \geq 1.
\end{align}
In (\ref{eq:J_r_T}), $\mathbf{T}_{[i]}$ denotes the $i$th largest component of $\mathbf{T}$.
Thus the cost function $J_r(\mathbf{T})$ is the sum of the $r$ longest completion times.
As special cases, $r=1$ represents the maximum completion time: $\max\{T_1,\dotsc,T_M\}$, and $r=M$ represents the sum of the completion times: $\sum_{i=1}^M T_i$.
In (\ref{eq:J_p_T}), the cost function $J_p(\mathbf{T})$ is the $\ell_p$-norm of the user completion times.
When $p$ is large, a more uniform distribution of the completion times will result in a lower cost.
Hence the parameter $p$ can be varied to achieve different fairness objectives with respect to resource allocation among the users.

\section{Packet Completion Time Minimization}
\label{sec:pkt_ctm}

\subsection{Perfect Channel Estimation}

We first consider the scenario where the channel gains $H_{ij}$'s can be accurately estimated and they are known by all users
\begin{align}
\asq{H_{ij}} &= G_{ij} \in \mathds{R}_+, \qquad i,j=1,\dotsc,M
\end{align}
where $G_{ij}$'s represent the channel power gains and they are known constants.
In this case, the minimization of the completion time cost function $J(\mathbf{T})$ can be mathematically formulated as the following optimization problem:
\begin{align}
\label{eq:opt_min_J_R}
&\text{minimize}\quad J(\mathbf{T})\\
&\text{over}\quad \mathbf{T}\in\mathds{R}_+^M,\; \mathbf{R}\in\mathds{R}_+^M,\; \mathbf{P}\in\mathds{R}_+^M\\
&\text{subject to}\notag\\
\label{eq:opt_Ti_Li_Ri}
& \quad T_i \geq L_i/R_i\\
\label{eq:opt_Ri_Gi_Pi}
& \quad R_i \leq B\log\biggl(1+\frac{G_{ii}P_i}{N_i + \sum_{j\neq i} G_{ij}P_j}\biggr)\\
\label{eq:opt_Pi_Pbi}
& \quad P_i \leq \bar{P}_i
\end{align}
where $i=1,\dotsc,M$, and the problem data $G_{ij}\in\mathds{R}_{+}$, $\bar{P}_i,N_i,L_i,B \in\mathds{R}_{++}$ are given.
The constraint (\ref{eq:opt_Ri_Gi_Pi}) in the optimization problem is not convex \cite{boche04:cvx_fea_qos_region, chiang07:pow_ctrl_geo_prog, oneill08:opt_adap_mod_wnum, qian09:mapel_wl_pow_ctrl}.
However, we show in the next section (\ref{eq:opt_min_J_R})--(\ref{eq:opt_Pi_Pbi}) can be transformed into a convex optimization problem, and hence its solution can be efficiently computed.

\subsection{Convex Optimization Formulation}
\label{sec:ctm_cvx_opt}

To formulate the completion time minimization problem given in (\ref{eq:opt_min_J_R})--(\ref{eq:opt_Pi_Pbi}) as a convex optimization problem, we first rewrite the constraints (\ref{eq:opt_Ti_Li_Ri}), (\ref{eq:opt_Ri_Gi_Pi}) as
\begin{align}
T_i &\geq \frac{L_i}{B\log(1+S_i)}, \quad i=1,\dotsc,M\\
\label{eq:Si_G_P_ij}
S_i &\leq \frac{G_{ii}P_i}{N_i + \sum_{j\neq i} G_{ij}P_j}, \quad i=1,\dotsc,M.
\end{align}
Next we apply the change of variables
\begin{align}
\label{eq:t_Si_t_Pi}
\tilde{S}_i &\triangleq \ln S_i,& \tilde{P}_i &\triangleq \ln P_i,& i &= 1,\dotsc,M
\end{align}
where $\ln$ is the natural logarithm.
The completion time minimization in (\ref{eq:opt_min_J_R})--(\ref{eq:opt_Pi_Pbi}) then becomes
\begin{align}
\label{eq:cvx_opt_min_J_R}
&\text{minimize}\quad J(\mathbf{T})\\
&\text{over}\quad \mathbf{T}\in\mathds{R}_+^M,\; \tilde{\mathbf{S}}\in\mathds{R}^M,\; \tilde{\mathbf{P}}\in\mathds{R}^M\\
&\text{subject to}\notag\\
\label{eq:cvx_opt_Ti_Li_xi}
& \quad T_i \geq \frac{L_i}{B\log\bigl(1+\exp(\tilde{S}_i)\bigr)}\\
\label{eq:cvx_opt_exp_xi_yi}
& \quad \tilde{S}_i -\tilde{P}_i +
\ln\biggl\{ N_i + \sum_{j\neq i} G_{ij} \exp(\tilde{P}_j)\biggr\} -\ln G_{ii} \leq 0\\
\label{eq:cvx_opt_yi_Pbi}
& \quad \tilde{P}_i \leq \ln \bar{P}_i
\end{align}
where $i=1,\dotsc,M$.
Note that the SINR constraint in (\ref{eq:cvx_opt_exp_xi_yi}) follows from rewriting (\ref{eq:Si_G_P_ij}) as
\begin{align}
S_i P_i^{-1} G_{ii}^{-1} N_i + \sum_{j \neq i} S_i P_i^{-1} P_j G_{ii}^{-1} G_{ij} \leq 1
\end{align}
and taking logarithm on both sides after applying (\ref{eq:t_Si_t_Pi}).
The change of variables is similar to the transformation techniques in geometric programming (GP) problems \cite{chiang07:pow_ctrl_geo_prog}.
In particular, the log-sum-exp function in constraint (\ref{eq:cvx_opt_exp_xi_yi}) is convex \cite{boyd04:convex_opt}.
The convexity of (\ref{eq:cvx_opt_Ti_Li_xi}) can be verified from its second-order conditions.
Specifically, the right-hand side of (\ref{eq:cvx_opt_Ti_Li_xi}) is twice-differentiable, and its second derivative is positive
\begin{align}
\frac{d^2}{dx^2}\bigl(\log(1+e^x)\bigr)^{-1} &=
\frac{e^x \ln 2 \bigl(2e^x - \ln(1+e^x)\bigr)}{(1+e^x)^2\bigl(\ln(1+e^x)\bigr)^3}\\
&> 0
\end{align}
which follows from the inequality $y>\ln(1+y)$ for $y>0$.
Note that the transformation in (\ref{eq:t_Si_t_Pi}) does impose a slight loss of generality as we assume $P_i \neq 0$.
Nevertheless, the formulation in (\ref{eq:cvx_opt_min_J_R})--(\ref{eq:cvx_opt_yi_Pbi}) is otherwise valid for all ranges of SINR, and its solution can be efficiently computed using standard numerical techniques in convex optimization, e.g., by the interior point method
\cite{renegar01:math_ipm_cvxopt, boyd04:convex_opt}.
Note that we may consider additional linear or convex constraints on $\mathbf{T}$, and sum power constraints on subsets of $\mathbf{P}$:
they can be readily incorporated in the optimization problem without violating its convexity.

\subsection{Fading Channels and Power Adaptation}
\label{sec:fadg_pow_adapt}

In this section, we consider fading channels, i.e., the channel gains $H_{ij}$'s in (\ref{eq:Yi_Gij_Xj_Zi}) experience random variations.
In particular, we assume the channel gains can be characterized by a set of $s\in\{1,\dotsc,S\}$ discrete fading states
\begin{align}
\asq{\mathbf{H}} &=
\begin{cases}
\mathbf{G}^{(1)} & \text{with probability $p_1$}\\
\quad\vdots & \\
\mathbf{G}^{(S)} & \text{with probability $p_S$},
\end{cases}&
\sum_{s=1}^{S} p_s &= 1,\quad p_s \geq 0
\end{align}
where
$\mathbf{H} \triangleq [H_{ij}] \in \mathds{C}^{M\times M}$
is the channel gain matrix,
the squared magnitude operation is taken component-wise,
and
$\mathbf{G}^{(s)} \triangleq [G_{ij}^{(s)}] \in \mathds{R}_+^{M\times M}$
are the known channel power gain realizations.
For example, the discrete states may represent a finite set of quantized channel estimates.
We consider slow fading where the duration of a fading state is long compared to the packet completion times.
We assume the channel state $s$ can be accurately estimated and it is known by all users,
i.e., all transmitters and receivers have perfect channel state information (CSI).
Power control when CSI is unavailable at the transmitters is treated in Section~\ref{sec:ro_pow_ctrl}.

We first consider the case where each user can adapt its transmission power level according to the fading state.
Suppose user~$i$ transmits at power level $P_i^{(s)}$ in fading state $s$, subject to the average power constraints
\begin{align}
\E[P_i] \triangleq \sum_{s=1}^S p_s P_i^{(s)} \leq \bar{P}_i,\quad i=1,\dotsc,M.
\end{align}
We wish to find the optimal power control policy $P_i^{(s)}$ with respect to the fading state $s$ for each User~$i$.
To minimize a cost function of the expected completion times, the optimization problem can be formulated as
\begin{align}
\label{eq:adapt_pow_min_J_ET}
&\text{minimize}\quad J(\E[\mathbf{T}])\\
&\text{over}\quad \E[\mathbf{T}]\in\mathds{R}_+^M,\; \mathbf{T}^{(s)}\in\mathds{R}_+^M,\;
\mathbf{S}^{(s)}\in\mathds{R}_+^M,\; \mathbf{P}^{(s)}\in\mathds{R}_+^M\\
&\text{subject to}\notag\\
\label{eq:adapt_pow_E_Ti}
& \quad \E[T_i] = \sum_{s=1}^S p_s T_i^{(s)}\\
\label{eq:adapt_pow_Ti_Li_Si}
& \quad T_i^{(s)} \geq \frac{L_i}{B\log\bigl(1+S_i^{(s)}\bigr)}\\
\label{eq:adapt_pow_Si_G_P_ij}
& \quad S_i^{(s)} \leq \frac{G_{ii}^{(s)}P_i^{(s)}}{N_i + \sum_{j\neq i} G_{ij}^{(s)}P_j^{(s)}}\\
\label{eq:adapt_pow_Pi_Pbi}
& \quad \sum_{s=1}^S p_s P_i^{(s)} \leq \bar{P}_i
\end{align}
where
$s=1,\dotsc,S$, $i=1,\dotsc,M$,
$\E[\mathbf{T}] \triangleq \bigl[\E[T_1] \,\dots\, \E[T_M]\bigr]^T$,
$\mathbf{T}^{(s)} \triangleq [T_1^{(s)} \,\dots\, T_M^{(s)}]^T$, and the vectors $\mathbf{S}^{(s)}$, $\mathbf{P}^{(s)}$ are defined similarly.
The optimization in (\ref{eq:adapt_pow_min_J_ET})--(\ref{eq:adapt_pow_Pi_Pbi}) can then be transformed into a convex optimization problem by similar techniques as described in Section~\ref{sec:ctm_cvx_opt}.
Note that to minimize the expected value of the cost function, it can be handled similarly by replacing the objective function in (\ref{eq:adapt_pow_min_J_ET}) by
\begin{align}
\label{eq:E_J_T}
\E[J(\mathbf{T})] = \sum_{s=1}^S p_s J(\mathbf{T}^{(s)})
\end{align}
where convexity is preserved in the nonnegative weighted sum of convex functions.

In the case where each user cannot adapt its transmission power level to the fading state (i.e., the transmitter is under a short-term power constraint),
the optimization problem is similar to (\ref{eq:adapt_pow_min_J_ET})--(\ref{eq:adapt_pow_Pi_Pbi}), but with the average power constraint in (\ref{eq:adapt_pow_Pi_Pbi}) replaced by separate power constraints for each fading state
\begin{align}
\label{eq:P_i_s_short_pow_constr}
P_i^{(s)} \leq \bar{P}_i, \quad s=1,\dotsc,S, \quad i=1,\dotsc,M.
\end{align}
Note that under the short-term power constraints of (\ref{eq:P_i_s_short_pow_constr}), minimizing the expected cost function (\ref{eq:E_J_T}) decomposes into $S$ independent optimization problems:
i.e., each of $J(\mathbf{T}^{(s)})$, for $s=1,\dotsc,S$, can be minimized separately.

\subsection{Relations to Rate Utility Maximization}
\label{sec:rate_region}

In general, in a wireless network as defined in (\ref{eq:Yi_Gij_Xj_Zi})--(\ref{eq:Ti_Li_R_i}), minimizing a convex cost function $J(\mathbf{T})$ of the completion times $T_1,\dotsc,T_M$ is not equivalent to maximizing a concave utility function $U(\mathbf{R})$ of the rates $R_1,\dotsc,R_M$.
In particular, maximizing $U(\mathbf{R})$ over the rate region is in general non-convex \cite{boche04:cvx_fea_qos_region}:
at high SINR it can be approximately formulated as a GP, and in the medium- to low-SINR regime there are iterative approximation methods \cite{chiang07:pow_ctrl_geo_prog}.
Suppose the cost function $J_+(\mathbf{T})$ is convex and nondecreasing in each argument $T_i$,
then completion time minimization is a special case of rate utility maximization where the optimization problem can be formulated as convex.
To see that minimizing $J_+(\mathbf{T})$ can be posed as a rate utility maximization problem, we define the corresponding rate utility function
\begin{align}
U_T(\mathbf{R}) &\triangleq -J_+(L_1/R_1,\dotsc,L_M/R_M).
\end{align}
Note that minimizing $J_+(\mathbf{T})$ is equivalent to maximizing $U_T(\mathbf{R})$, and the utility function $U_T(\mathbf{R})$ is concave in $\mathbf{R}$ as prescribed by the convexity composition rules \cite{boyd04:convex_opt}.
Therefore, in general, a rate utility maximization method can be used to minimize $J_+(\mathbf{T})$.
On the other hand, it is not true that a completion time minimization method is applicable in maximizing any general concave rate utility functions.
For example, consider a rate utility optimization problem: $\;\text{maximize}\;U(\mathbf{R})$,
where $U(\cdot)$ is concave in $\mathbf{R}$.
A naive approach may attempt to reformulate the above optimization into a completion time minimization problem as:
$\;\text{minimize}\;-U(L_1/R_1,\dotsc,L_M/R_M)$.
However,
since convexity is not preserved when a convex function is composed with inverses,
the resulting cost function $-U(\cdot)$ is not necessary convex in $\mathbf{R}$.

Nevertheless, in the converse, some rate utility maximization problems can be formulated as minimizing convex functions of the completion times.
Suppose we minimize a nonnegatively weighted sum of the completion times
\begin{align}
\label{eq:J_w_T}
J_w(\mathbf{T}) = a_1 T_1 + \dotsb + a_M T_M, \quad \mathbf{a}\triangleq[a_1 \,\dots\, a_M]^T\in\mathds{R}_+^M
\end{align}
then it is equivalent to maximizing
\begin{align}
U_d(\mathbf{R}) = -\frac{a_1'}{R_1} - \dotsb - \frac{a_M'}{R_M}
\end{align}
where $U_d(\mathbf{R})$ is the utility function that corresponds to minimum potential delay fairness \cite{srikant03:math_int_cong_ctrl}, with $a_i' \triangleq a_i L_i$, $i=1,\dotsc,M$.
In addition, minimizing $J_w(\mathbf{T})$ in (\ref{eq:J_w_T}) is also equivalent to maximizing the weighted harmonic mean of the rates
\begin{align}
U_h(\mathbf{R}) & = \biggl(\frac{a_1'}{R_1} + \dotsb + \frac{a_M'}{R_M}\biggr)^{-1}.
\end{align}
Note that by applying Jensen's inequality on the convex function $1/x$ for $x\in\mathds{R}_{++}$, we have
\begin{align}
\label{eq:inv_x_jensen}
\frac{1}{a_1'R_1 + \dotsb + a_M'R_M} \leq \frac{a_1'}{R_1} + \dotsb + \frac{a_M'}{R_M}
\end{align}
which implies
\begin{align}
\label{eq:w_R_Uh}
a_1'R_1 + \dotsb + a_M'R_M \geq U_h(\mathbf{R}).
\end{align}
Hence maximizing $U_h(\mathbf{R})$ provides a lower bound to $\max\, a_1'R_1 + \dotsb + a_M'R_M$, which represents a weighted throughput of the wireless network.
In particular, the bound is tight when $R_1 = \dotsb = R_M$, as equality is achieved in (\ref{eq:inv_x_jensen}).
Therefore, maximizing the minimum rate in (\ref{eq:U_n_R}) below can be formulated as a minimization of the convex cost function $J_x(\mathbf{T})$ as given in (\ref{eq:J_x_T}), which corresponds to the maximum completion time
\begin{align}
\label{eq:U_n_R}
U_n(\mathbf{R}) &= \min\{R_1,\dotsc,R_M\}\\
\label{eq:J_x_T}
J_x(\mathbf{T}) &= \max\{T_1,\dotsc,T_M\}.
\end{align}

Moreover, the entire rate region achievable under (\ref{eq:opt_Ri_Gi_Pi})--(\ref{eq:opt_Pi_Pbi}) can be characterized in terms of the corresponding completion time region.
Specifically, the completion time region as characterized in (\ref{eq:cvx_opt_Ti_Li_xi})--(\ref{eq:cvx_opt_yi_Pbi}) is convex,
and its boundary are given by the minimizer of $J_w(\mathbf{T})$ in (\ref{eq:J_w_T}) over all $\mathbf{1}^T \mathbf{a} = 1$, $\mathbf{a}\in\mathds{R}_+^M$.
In turn, from the monotonicity of (\ref{eq:Ti_Li_R_i}), each minimal completion time vector $\mathbf{T}^\star = \arg \min J_w(\mathbf{T})$ corresponds to a maximal rate vector $\mathbf{R}^\star$ on the boundary on the rate region (\ref{eq:opt_Ri_Gi_Pi})--(\ref{eq:opt_Pi_Pbi}), with $R_i^\star = L_i/T_i^\star$, for $i=1,\dotsc,M$.
As a numerical example, we consider the following 2-user AWGN wireless packet network:
\begin{align}
\mathbf{G} &= \begin{bmatrix}0.42 & 0.89\\ 0.63 & 0.15\end{bmatrix},
& \mathbf{L} &= \begin{bmatrix}100\\100\end{bmatrix}\\
\label{eq:nu_ex_Pb_N}
\bar{\mathbf{P}} &= \begin{bmatrix}0\\0\end{bmatrix} \dB,
& \mathbf{N} &= \begin{bmatrix}0\\0\end{bmatrix} \dB
\end{align}
with $B=0.1\,\mathrm{MHz}$, and maximum completion time constraints: $T_i \leq 100\,\mathrm{ms}$.
The completion time region and its corresponding rate region are shown in Fig.~\ref{fig:T1_T2} and Fig.~\ref{fig:R1_R2}, respectively.
Note that the power constraints as given in (\ref{eq:nu_ex_Pb_N}) belong to the low SINR regime, where the high-SINR GP approximation does not readily apply.
The completion time region is convex in Fig.~\ref{fig:T1_T2}; however, note that its rate region counterpart is non-convex as can be observed in Fig.~\ref{fig:R1_R2}.

\begin{figure}
  \centering
  \includegraphics*[width=12cm]{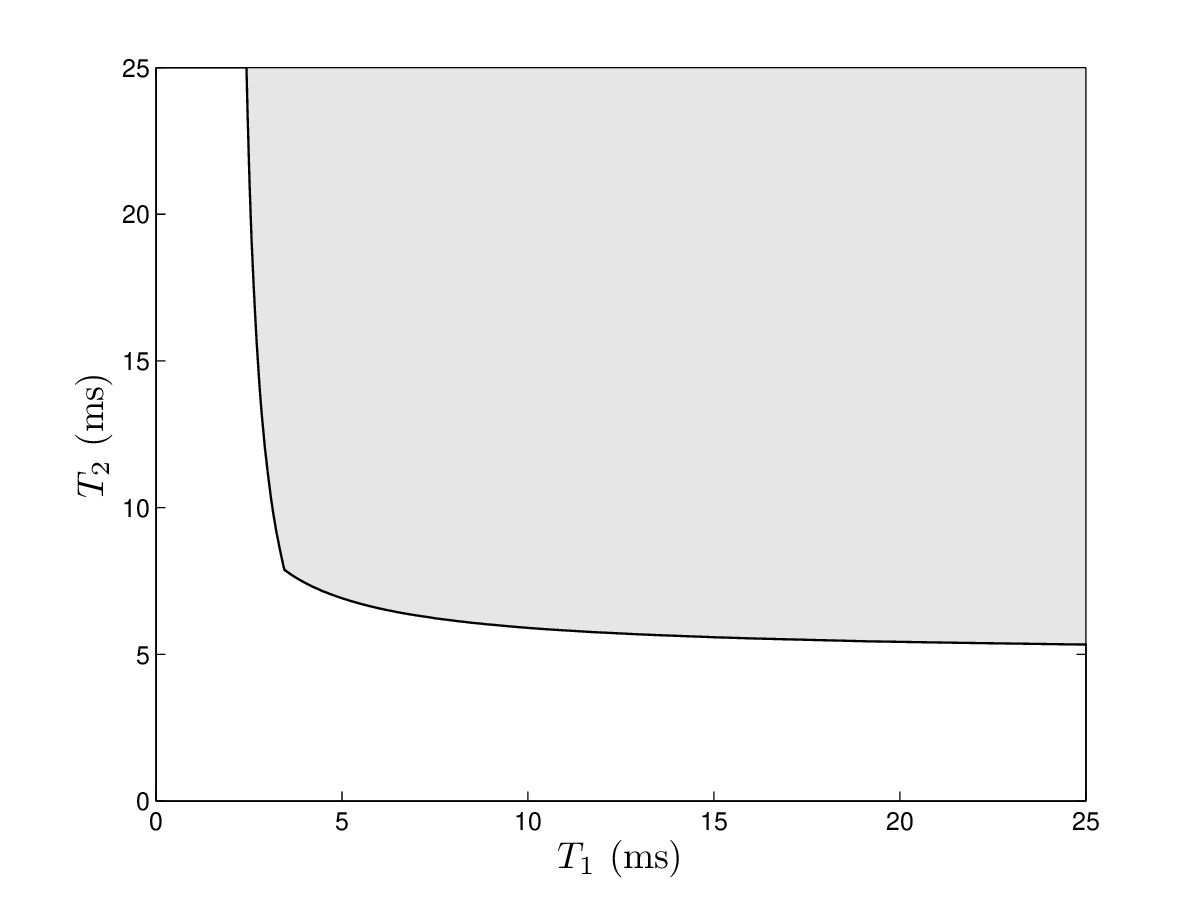}
  \caption{Completion time region ($\bar{P}_1$ = $\bar{P}_2$ = $0\,\dB$).}
  \label{fig:T1_T2}
\end{figure}

\begin{figure}
  \centering
  \includegraphics*[width=12cm]{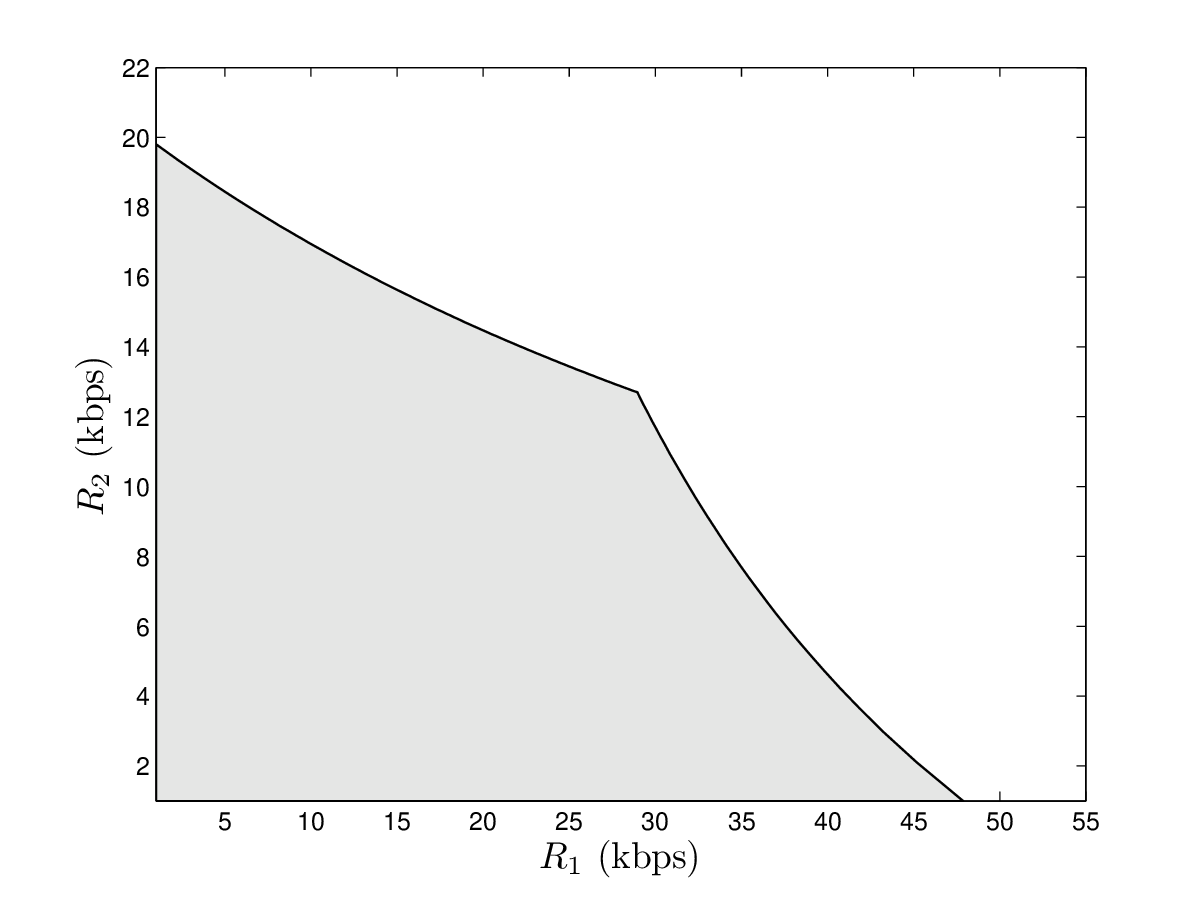}
  \caption{Rate region ($\bar{P}_1$ = $\bar{P}_2$ = $0\,\dB$).}
  \label{fig:R1_R2}
\end{figure}

\section{Robust Power Control}
\label{sec:ro_pow_ctrl}

\subsection{Outage Probability Constraints}

In Section~\ref{sec:pkt_ctm}, we assume that the channel gains $H_{ij}$'s can be accurately estimated.
However, in a fading environment where the channel estimates are updated not as fast as the channels vary, the transmitters may not know the $H_{ij}$'s perfectly.
In this section, we consider the scenario where the channel gains are modeled as random variables
\begin{align}
\asq{\mathbf{H}} &= \mathbf{W} \triangleq [W_{ij}] \in\mathds{R}_+^{M\times M}
\end{align}
where the transmitters know the joint probability distribution of $W_{ij}$'s but do not know their realization (the receivers have perfect channel knowledge).
Therefore, the transmitters have only channel distribution information (CDI) but not instantaneous CSI about the fading states.
As in Section~\ref{sec:fadg_pow_adapt}, we assume the duration of a fading state is long compared to the packet completion times.
Based on the channel distribution, each User~$i$ chooses a target SINR $S_i$.
Should the realized channel SINR fall below the target $S_i$, the receiver cannot decode the transmitter's message, and it results in an \emph{outage} event.
To ensure the network operates with an acceptable level of reliability, we extend the completion time minimization framework in Section~\ref{sec:pkt_ctm} to additionally consider constraints imposed on the permissible probability of outage.
Specifically, we minimize the completion time cost function subject to a set of outage probability constraints: $q_i \in[0,1]$, $i=1,\dotsc,M$,
where we stipulate that the probability of User~$i$'s transmission in outage not exceed $q_i$.

Incorporating the outage probability constraints $q_i$'s, the minimization of the completion time cost function is described by the following stochastic programming \cite{prekopa95:stoc_prog} problem:
\begin{align}
\label{eq:ro_op_min_J_R}
&\text{minimize}\quad J(\mathbf{T})\\
&\text{over}\quad \mathbf{T}\in\mathds{R}_+^M,\; \mathbf{S}\in\mathds{R}_+^M,\; \mathbf{P}\in\mathds{R}_+^M\\
&\text{subject to}\notag\\
\label{eq:ro_op_Ti_Li_Si}
& \quad T_i \geq \frac{L_i}{B\log(1+S_i)}\\
\label{eq:ro_op_Si_Gi_W_i_Pi}
& \quad \Pr\biggl\{\frac{W_{ii}P_i}{N_i + \sum_{j\neq i} W_{ij}P_j} \leq S_i \biggr\} < q_i\\
\label{eq:ro_op_Pi_Pbi}
& \quad P_i \leq \bar{P}_i
\end{align}
where $i=1,\dotsc,M$.
In the following,
we show that the minimization in (\ref{eq:ro_op_min_J_R})--(\ref{eq:ro_op_Pi_Pbi}) can be posed as a convex optimization problem for a wide class of channel fading distributions commonly considered in wireless communications.

\subsection{Reliability Functions}

In terms of the transformed variables in (\ref{eq:t_Si_t_Pi}), we first define the \emph{reliability function} as
\begin{align}
\Phi_i(\tilde{S}_i, \mathbf{\tilde{P}}) &\triangleq \Prob\{\text{User~$i$ not in outage}\}\\
\label{eq:Phi_i_Pr_W_S_P}
& = \Prob\biggl\{\tilde{W}_{ii} > \ln \Bigl\{N_i\exp(\tilde{S}_i-\tilde{P}_i)
+ \sum_{j\neq i} \exp(\tilde{S}_i-\tilde{P}_i+\tilde{P}_j+\tilde{W}_{ij})
\Bigr\}\biggr\}
\end{align}
where (\ref{eq:Phi_i_Pr_W_S_P}) follows from rearranging (\ref{eq:ro_op_Si_Gi_W_i_Pi}) with the additional change of variables
\begin{align}
\label{eq:t_W_ij_ln_W_ij}
\tilde{W}_{ij} &\triangleq \ln W_{ij}, \quad i,j=1,\dotsc,M.
\end{align}
Next, we characterize the reliability function in terms of the channel distribution.
Let $\mathbf{W}_i \in \mathds{R}_+^M$ denote the aggregate channel power gains from all transmitters to Receiver~$i$.
Therefore, $\mathbf{W}_i$ is an $M$-component nonnegative random vector that corresponds to the $i$th row of the channel gain matrix $\mathbf{W}$
\begin{align}
\mathbf{W}_i \triangleq \begin{bmatrix}W_{i1} &\dots &W_{iM}\end{bmatrix}^T.
\end{align}
Further, let $\mathbf{w}_i \in \mathds{R}_+^M$ be a realization of $\mathbf{W}_i$.
Under (\ref{eq:t_W_ij_ln_W_ij}), the transformed vectors $\tilde{\mathbf{W}}_i$, $\tilde{\mathbf{w}}_i$ are defined similarly.
Let $f_{\mathbf{W}_i}(\mathbf{w}_i)$ denote the joint probability distribution function (PDF) of $\mathbf{W}_i$.
Theorem~\ref{thm:rlb_fn_log_ccv} below describes the sufficient condition that establishes the log-concavity of $\Phi_i(\tilde{S}_i, \tilde{\mathbf{P}})$, under which (\ref{eq:ro_op_min_J_R})--(\ref{eq:ro_op_Pi_Pbi}) can be posed as the following convex optimization problem:
\begin{align}
\label{eq:ro_exp_min_J_R}
&\text{minimize}\quad J(\mathbf{T})\\
&\text{over}\quad \mathbf{T}\in\mathds{R}_+^M,\; \tilde{\mathbf{S}}\in\mathds{R}^M,\; \tilde{\mathbf{P}}\in\mathds{R}^M\\
&\text{subject to}\notag\\
\label{eq:ro_exp_Ti_Li_Si}
& \quad T_i \geq \frac{L_i}{B\log\bigl(1+\exp(\tilde{S}_i)\bigr)}\\
\label{eq:ro_exp_Si_Gi_W_i_Pi}
& \quad \ln \Phi_i(\tilde{S}_i, \tilde{\mathbf{P}}) \geq \ln(1-q_i)\\
\label{eq:ro_exp_Pi_Pbi}
& \quad \tilde{P}_i \leq \ln \bar{P}_i
\end{align}
where $i=1,\dotsc,M$.
In the following,
let $\exp(\mathbf{w}_i)$ denote component-wise exponentiation of the vector $\mathbf{w}_i$.
\begin{theorem}
\label{thm:rlb_fn_log_ccv}
The reliability function $\Phi_i(\tilde{S}_i, \tilde{\mathbf{P}})$ is log-concave in $\tilde{S}_i, \tilde{\mathbf{P}}$ if
$f_{\mathbf{W}_i}\bigl(\exp(\mathbf{w}_i)\bigr)$ is log-concave in $\mathbf{w}_i$.
\end{theorem}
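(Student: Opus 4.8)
The plan is to exhibit $\Phi_i$ as the marginal of a jointly log-concave function and then invoke the Pr\'ekopa marginalization theorem, which states that integrating a nonnegative jointly log-concave function over a subset of its arguments produces a log-concave function of the remaining arguments \cite{prekopa95:stoc_prog}. Concretely, I will write $\Phi_i(\tilde{S}_i,\mathbf{\tilde{P}})$ as an integral over the transformed channel vector $\mathbf{\tilde{w}}_i$ of the product of (i) the density of $\mathbf{\tilde{W}}_i$ and (ii) the indicator of the ``not in outage'' event, show that both factors are log-concave (the indicator jointly in all variables, the density in $\mathbf{\tilde{w}}_i$), and conclude by marginalization.

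First I would compute the density of $\mathbf{\tilde{W}}_i$ under the change of variables (\ref{eq:t_W_ij_ln_W_ij}). Since $W_{ij}=\exp(\tilde{W}_{ij})$ componentwise, the Jacobian is diagonal and the pdf of $\mathbf{\tilde{W}}_i$ is $g_{\mathbf{\tilde{W}}_i}(\mathbf{\tilde{w}}_i)=f_{\mathbf{W}_i}\bigl(\exp(\mathbf{\tilde{w}}_i)\bigr)\exp(\mathbf{1}^T\mathbf{\tilde{w}}_i)$. The hypothesis makes the first factor log-concave in $\mathbf{\tilde{w}}_i$, and the Jacobian factor $\exp(\mathbf{1}^T\mathbf{\tilde{w}}_i)$ is log-affine, hence log-concave; since a product of log-concave functions is log-concave \cite{boyd04:convex_opt}, $g_{\mathbf{\tilde{W}}_i}$ is log-concave on $\mathbb{R}^M$. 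This is precisely why the assumption is stated on $f_{\mathbf{W}_i}\bigl(\exp(\cdot)\bigr)$ rather than on $f_{\mathbf{W}_i}$ itself.

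Next I would argue that the ``not in outage'' region is convex in the transformed coordinates. From (\ref{eq:Phi_i_Pr_W_S_P}), user~$i$ is not in outage exactly when
\[
\tilde{w}_{ii} > \ln\Bigl\{\exp(\tilde{S}_i-\tilde{P}_i+\ln N_i)+\textstyle\sum_{j\neq i}\exp(\tilde{S}_i-\tilde{P}_i+\tilde{P}_j+\tilde{w}_{ij})\Bigr\}.
\]
The right-hand side is a log-sum-exp of affine functions of $(\tilde{S}_i,\mathbf{\tilde{P}},\{\tilde{w}_{ij}\}_{j\neq i})$, hence jointly convex \cite{boyd04:convex_opt}; so the set $B$ of points $(\tilde{S}_i,\mathbf{\tilde{P}},\mathbf{\tilde{w}}_i)$ satisfying this inequality, being of the form $\{t>h(x)\}$ with $h$ convex, is convex, and its indicator $\mathbf{1}_B$ is jointly log-concave. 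Writing
\[
\Phi_i(\tilde{S}_i,\mathbf{\tilde{P}})=\int_{\mathbb{R}^M}\mathbf{1}_B(\tilde{S}_i,\mathbf{\tilde{P}},\mathbf{\tilde{w}}_i)\,g_{\mathbf{\tilde{W}}_i}(\mathbf{\tilde{w}}_i)\,d\mathbf{\tilde{w}}_i,
\]
the integrand is a product of two functions that are each jointly log-concave in $(\tilde{S}_i,\mathbf{\tilde{P}},\mathbf{\tilde{w}}_i)$ — the density is constant in $(\tilde{S}_i,\mathbf{\tilde{P}})$, so its log-concavity in $\mathbf{\tilde{w}}_i$ suffices — and is therefore jointly log-concave; marginalizing out $\mathbf{\tilde{w}}_i$ then gives that $\Phi_i$ is log-concave in $(\tilde{S}_i,\mathbf{\tilde{P}})$. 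With this, constraint (\ref{eq:ro_exp_Si_Gi_W_i_Pi}) is a convex superlevel-set constraint and (\ref{eq:ro_exp_min_J_R})--(\ref{eq:ro_exp_Pi_Pbi}) is a convex program.

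I do not expect a genuine obstacle here; the only real care needed is bookkeeping around the change of variables — recognizing that the exponential Jacobian is itself log-concave, so that the log-concavity hypothesis transfers cleanly to $g_{\mathbf{\tilde{W}}_i}$ — together with checking that, after the substitutions (\ref{eq:t_Si_t_Pi}) and (\ref{eq:t_W_ij_ln_W_ij}), only affine exponents enter the log-sum-exp describing the outage boundary, so that its convexity applies. The remaining steps are routine applications of log-concavity calculus and Pr\'ekopa's theorem.
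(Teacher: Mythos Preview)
Your argument is correct and rests on the same pillars as the paper's proof: the log-concavity of the transformed density $f_{\mathbf{\tilde{W}}_i}$ (your Jacobian computation is exactly the paper's Lemma~\ref{lem:f_tWi_log_ccv}), the convexity of the log-sum-exp outage boundary, and Pr\'ekopa's marginalization theorem. The organizational difference is that the paper first integrates out $\tilde{w}_{ii}$ analytically---obtaining the complementary cdf $\bar{F}_{\tilde{W}_{ii}}$ evaluated at the convex function $g_i(\tilde{S}_i,\mathbf{\tilde{P}},\mathbf{\tilde{w}}_{-i})$---and then invokes the rule ``log-concave nonincreasing composed with convex is log-concave'' before marginalizing over $\mathbf{\tilde{w}}_{-i}$; you instead keep all of $\mathbf{\tilde{w}}_i$ as integration variables and encode the outage event as a single convex-set indicator. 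Your route is a bit more direct, since it avoids separate appeals to the log-concavity of the complementary cdf and to the composition rule, and it handles possible dependence among the components of $\mathbf{W}_i$ transparently, whereas the paper's product form $\bar{F}_{\tilde{W}_{ii}}(\cdot)\,f_{\mathbf{\tilde{W}}_{-i}}(\cdot)$ reads most naturally under an independence assumption.
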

The proof is given in Appendix~\ref{sec:prf_thm_rlb_fn_log_ccv}.
The following proposition shows that the condition given in Theorem~\ref{thm:rlb_fn_log_ccv}
is satisfied in a wide class of commonly used wireless channel fading distributions.
\begin{prop}
\label{prop:ray_naka_lognorm}
The condition given in Theorem~\ref{thm:rlb_fn_log_ccv} is satisfied
when each channel experiences independent fading distributed as:
i)~Rayleigh, ii)~Nakagami, or iii)~log-normal.
\end{prop}
The proof of Proposition~\ref{prop:ray_naka_lognorm} is given in Appendix~\ref{sec:prf_prop_ray_naka_lognorm}.
Rayleigh fading is commonly used to model richly scattered environments;
Nakagami models significant line-of-sight propagation (or is used to approximate the Rician fading distribution);
and the log-normal distribution is typically used to model the effects of shadowing due to signal attenuation through obstacles \cite{stuber00:prin_mob_comm}.
Therefore, in all these cases, completion time minimization subject to outage probability constraints can be formulated as convex optimization problem (\ref{eq:ro_exp_min_J_R})--(\ref{eq:ro_exp_Pi_Pbi}).

\subsection{Independent Rayleigh Fading}

As an example of the robust power control formulation,
let us consider the scenario in which each channel power gain $W_{ij}$ exhibits independent Rayleigh fading (with mean $G_{ij}$).
Thus $W_{ij}$ is distributed exponentially as
\begin{align}
f_{W_{ij}}(w_{ij}) = G_{ij}^{-1}\exp(-w_{ij}/G_{ij}),\quad w_{ij}\geq0, \quad i,j = 1,\dotsc,M
\end{align}
where $G_{ij}$ is a known constant that represents the average channel power gain.
In this case the reliability probability can be written as follows \cite{kandukuri02:intf_fading_outage_prob}:
\begin{align}
\Prob\{\text{User~$i$ not in outage}\} &= \exp\biggl(-\frac{S_iN_i}{G_{ii}P_i}\biggr) \prod_{j\neq i} \biggl(1+\frac{S_iG_{ij}P_j}{G_{ii}P_i}\biggr)^{-1}.
\end{align}
The logarithm of the reliability function then evaluates to
\begin{align}
\label{eq:Ray_ln_rel_fcn}
\ln \Phi_i(\tilde{S}_i,\tilde{\mathbf{P}})
&= -(N_i/G_{ii})\exp(\tilde{S}_i-\tilde{P}_i) - \sum_{j\neq i} \ln \Bigl\{1+(G_{ij}/G_{ii})\exp\bigl(\tilde{S}_i + \tilde{P}_j - \tilde{P}_i\bigr)\Bigr\}
\end{align}
which can be verified to be a concave function.
Therefore, under independent Rayleigh fading channels, the robust power control problem is
\begin{align}
\label{eq:Ray_min_J_R}
&\text{minimize}\quad J(\mathbf{T})\\
&\text{over}\quad \mathbf{T}\in\mathds{R}_+^M,\; \tilde{\mathbf{S}}\in\mathds{R}^M,\; \tilde{\mathbf{P}}\in\mathds{R}^M\\
&\text{subject to}\notag\\
\label{eq:Ray_Ti_Li_Si}
& \quad T_i \geq \frac{L_i}{B\log\bigl(1+\exp(\tilde{S}_i)\bigr)}\\
\label{eq:Ray_Si_Gi_W_i_Pi}
& \quad -(N_i/G_{ii})\exp(\tilde{S}_i-\tilde{P}_i) - \sum_{j\neq i} \ln \Bigl\{1+(G_{ij}/G_{ii})\exp\bigl(\tilde{S}_i + \tilde{P}_j - \tilde{P}_i\bigr)\Bigr\} +  \geq \ln(1-q_i)\\
\label{eq:Ray_Pi_Pbi}
& \quad \tilde{P}_i \leq \ln \bar{P}_i
\end{align}
where $i=1,\dotsc,M$, and the SINR constraint in (\ref{eq:Ray_Si_Gi_W_i_Pi}) follows from the reliability function under independent Rayleigh fading as given in (\ref{eq:Ray_ln_rel_fcn}).

\section{Wireless Cellular Networks}
\label{sec:wl_cell_net}

In this section, we consider the completion time minimization and robust power control frameworks developed in
Sections~\ref{sec:pkt_ctm} and~\ref{sec:ro_pow_ctrl},
and apply them in the setting of a wireless cellular network.
We assume the users do not cooperate in the network, and interference is treated as noise.
Power minimization in cellular networks subject to minimum rate constraints is studied in
\cite{foschini93:dist_auto_pow_conv, yates95:frwk_ul_pow_ctrl_cell}.
However, rate maximization subject to transmit power constraints remains an open problem.
In the following, we consider completion time as the performance metric, and present numerical examples where transmission power is optimized when every user has global channel knowledge, and in the case when the transmitters have only channel distribution information.

Let us consider a cellular network that consists of two rings of hexagonal cells, as illustrated in Fig.~\ref{fig:hex_cell}.
To minimize boundary effects, we assume wraparound at the edges of the network.
Each cell has three sectors; thus there are $19$ cells, or $57$ sectors, in the network, where each sector corresponds to one base station.
We consider the downlink channel: each base station has one transmit antenna and wishes to send information to one mobile, and each mobile has a single receive antenna.
Hence, there are $M=57$ users in the network.

We assume parameters that correspond to a typical urban outdoor cellular environment \cite{NGMN07:perf_eval_meth}.
The distance between any two closest cell centers is $0.5\,\mathrm{km}$.
Average channel SNR is determined by propagation path-loss (with a path-loss exponent of $3.76$) and log-normal shadow fading
(with $8$-$\dB$ standard deviation, $0.05$-$\mathrm{km}$ decorrelation distance, and $0.5$ correlation across base stations).
The transmit antenna at each sector has a parabolic beam pattern.
The mobiles are randomly populated in the network.
A mobile is associated with the base station to which it has the highest average SNR (up to the maximum of one mobile per base station).
The mobiles are indexed such that Base~$i$ wishes to transmit to Mobile~$i$.
In composition with the path-loss and shadowing, each channel also experiences i.i.d. fast Rayleigh fading.
Each base station is under a transmit power constraint, which corresponds to a cell-edge average SNR of $20\,\dB$.
We assume short-term power constraints where power allocation across fading states is not allowed.
In the network, each mobile suffers interference from all other base stations:
i.e., a frequency reuse factor of $1$ is assumed.
The wireless channel has bandwidth $0.1\,\mathrm{MHz}$,
and we assume packet length $L_i = 100$, normalized receiver noise power $N_i = 0\,\dB$, for all $i=1,\dotsc,57$.

\begin{figure}
  \centering
  \includegraphics[scale=1.5]{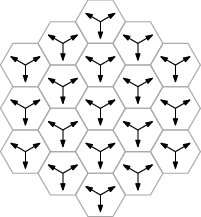}
  \caption{Hexagonal three-sectored cellular wireless network. There are $19$ cells in the network, with wraparound at the edges. Each cell has three sectors.
  Each sector corresponds to a base station, and each base station serves one mobile.
  Each arrow represents the boresight direction of a base station's antenna beam.}
  \label{fig:hex_cell}
\end{figure}

In the numerical experiments, $50$ instances of shadow fading realizations of the network are generated.
For each shadow fading realization, $10$ Raleigh fading instances are generated (i.e., there are a total of $500$ sets of channel realizations).
The convex optimization problems are solved using the primal-dual interior-point algorithm described in \cite[Section~11.7]{boyd04:convex_opt}.
Fig.~\ref{fig:avg_completion_time} shows the average completion time in different transmission schemes:
i) full power; ii) completion time minimization; and iii) robust power control
subject to different outage probability constraints $q=0.05,0.1,0.15,0.2$,
where we assume a common outage probability constraint (i.e., $q_i = q$, $i=1,\dotsc,57$).
Full power transmission, which is used as a baseline in the comparisons in this section, refers to the traditional scheme in wireless networks in which each base station transmits at its full power, and the transmit power is undifferentiated among the users: $P_i=\bar{P}_i$, for $i=1,\dotsc,M$.
Each mobile provides feedback on the realized SINR $S_i$ to its base station so that the encoding rate is set accordingly.
Completion time minimization refers to the solution of (\ref{eq:cvx_opt_min_J_R})--(\ref{eq:cvx_opt_yi_Pbi}), and
robust power control refers to the solution of (\ref{eq:Ray_min_J_R})--(\ref{eq:Ray_Pi_Pbi}).
In both cases, we minimize the sum completion time, i.e.,
we set the objective function $J(\mathbf{T}) = \sum_{i=1}^{57} T_i$
in (\ref{eq:cvx_opt_min_J_R}) and (\ref{eq:Ray_min_J_R}).

The cumulative distribution function (CDF) curves of the completion time in the different transmission schemes are exhibited in Fig.~\ref{fig:CDF_completion_time}.
In completion time minimization, we assume global channel knowledge, and an instance of the optimization problem (\ref{eq:cvx_opt_min_J_R})--(\ref{eq:cvx_opt_yi_Pbi}) is solved for each of the $500$ sets of random channel realizations.
On the other hand, under robust power control, the transmitters know only the shadow fading realizations but not the fast Raleigh fading realizations (the receivers have perfect channel knowledge).
Thus an instance of the optimization problem (\ref{eq:Ray_min_J_R})--(\ref{eq:Ray_Pi_Pbi}) is solved for each of the $50$ sets of shadow fading realizations, and the same solution (i.e., the transmission power $P_i^{\star}$ and target SINR $S_i^{\star}$) is used in each of the $10$ instances of fast Rayleigh fading associated with the shadow fading realization.
Shown in Figs.~\ref{fig:avg_completion_time} and~\ref{fig:CDF_completion_time} under robust power control are the completion times associated with the target SINR $S_i^{\star}$.
An outage event occurs if the channel realization cannot support the target SINR\@.
The empirical distribution of the number of users (out of $57$) in outage is plotted in Fig.~\ref{fig:outage}.

\begin{figure}
  \centering
  \includegraphics*[width=12cm]{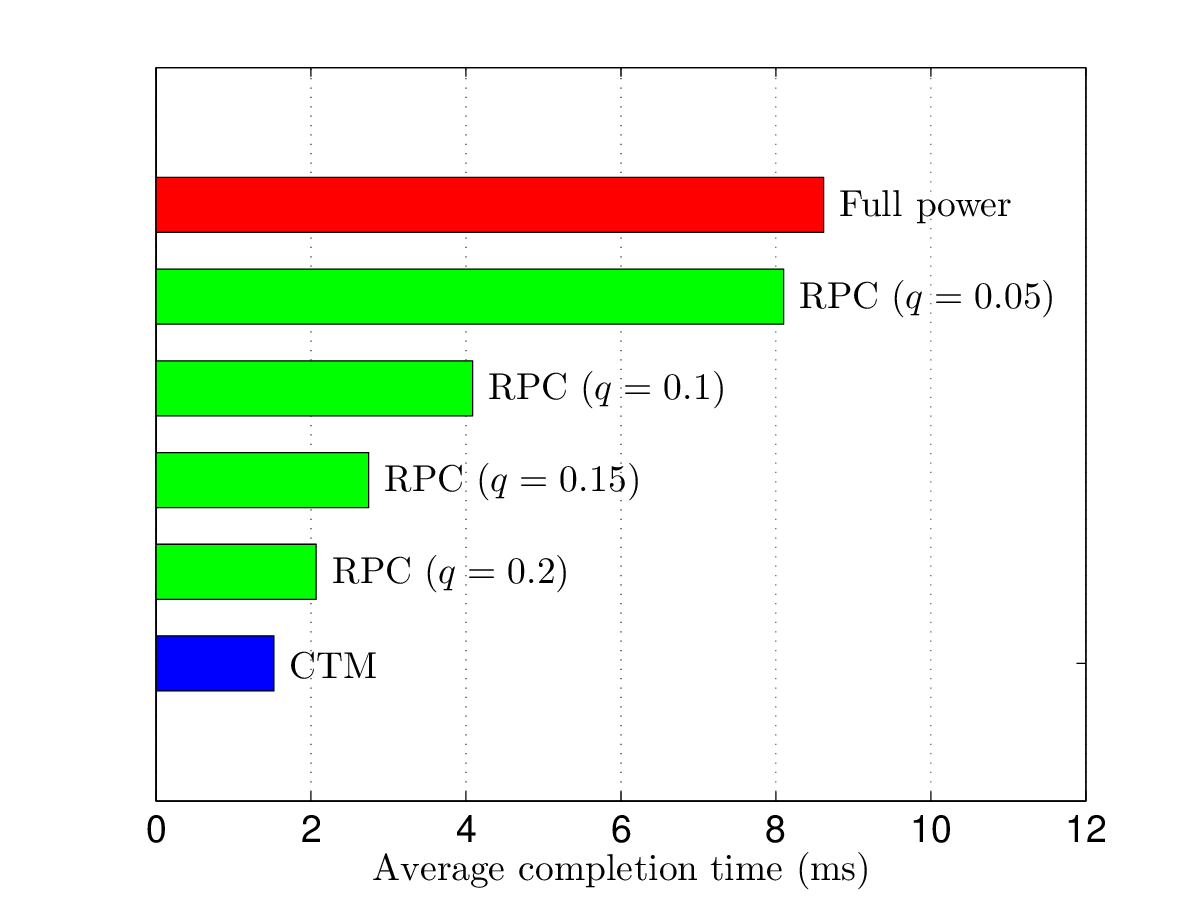}
  \caption{Average completion time in different transmission schemes: full power, completion time minimization (CTM), and robust power control (RPC) subject to different outage probability constraints.}
  \label{fig:avg_completion_time}
\end{figure}

\begin{figure}
  \centering
  \includegraphics*[width=12cm]{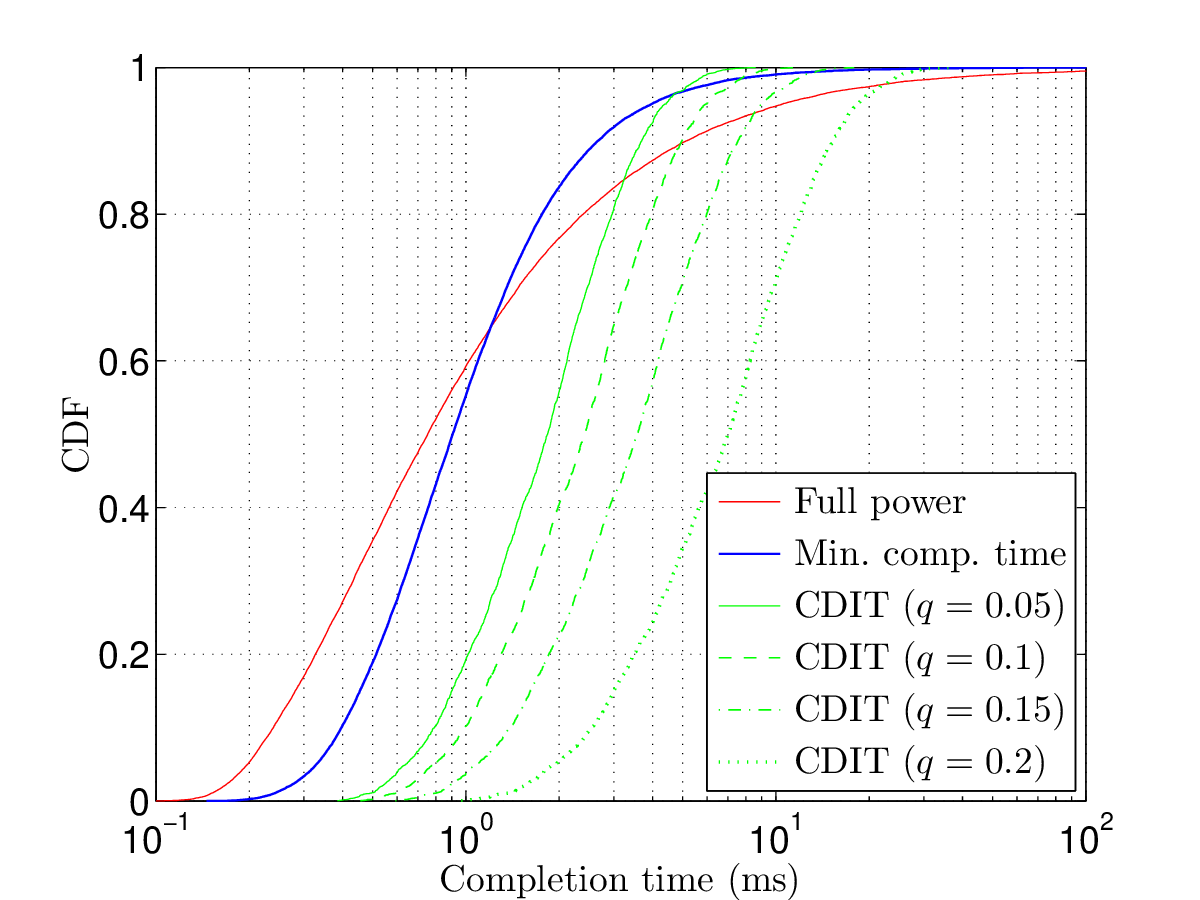}
  \caption{CDF curves of the completion time in different transmission schemes: full power, completion time minimization (CTM), and robust power control (RPC) subject to different outage probability constraints.}
  \label{fig:CDF_completion_time}
\end{figure}

\begin{figure}
  \centering
  \includegraphics*[width=12cm]{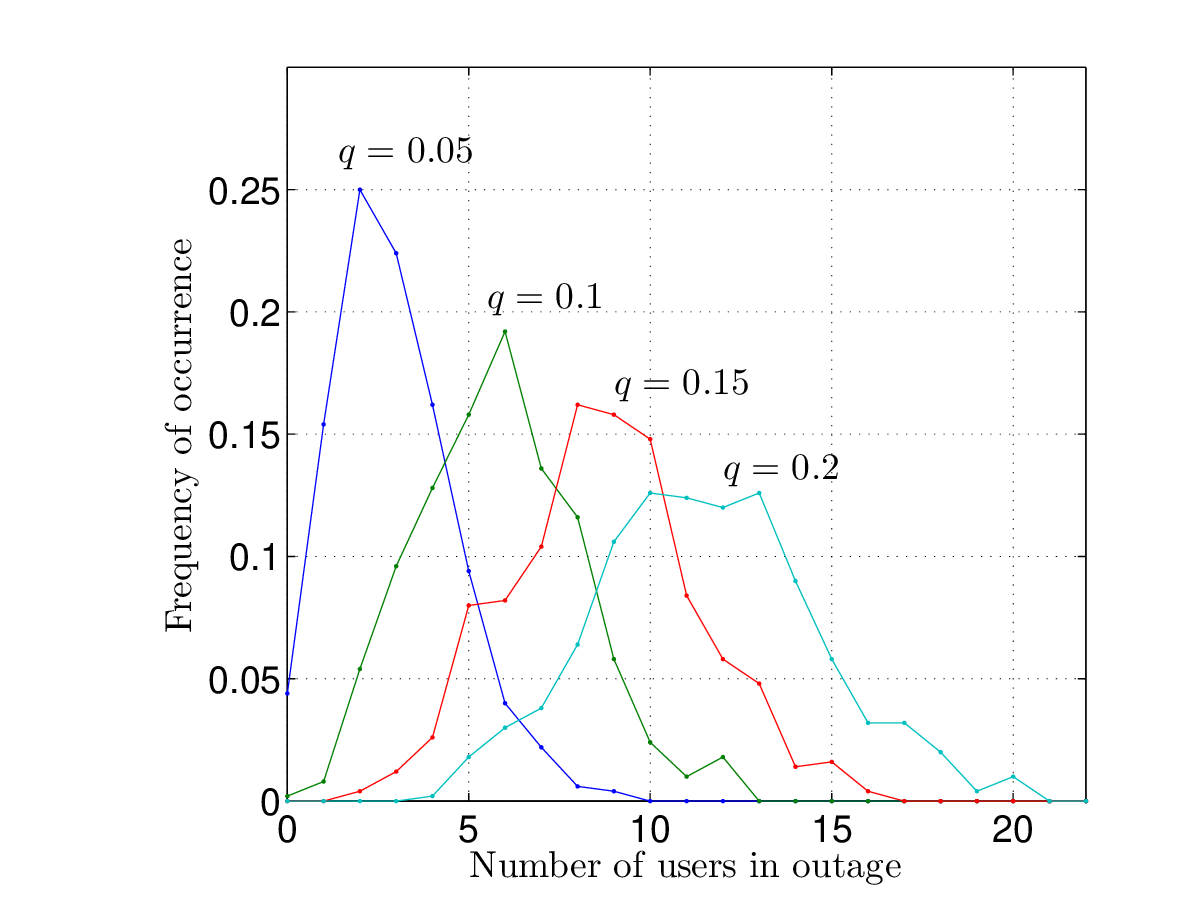}
  \caption{Empirical distribution of the number of users (out of $57$) in outage under robust power control. The outage probability constraints are labeled next to their corresponding curves.}
  \label{fig:outage}
\end{figure}

Fig.~\ref{fig:avg_completion_time} shows that under optimal power allocation, the average completion time is reduced approximately $82\%$ as compared to full power transmission.
Intuitively, the overall network performance is improved when, under each channel realization, the users with favorable channel conditions would power down their transmission to reduce interference to those with unfavorable channel conditions.
Fig.~\ref{fig:CDF_completion_time} shows that the completion time minimization scheme virtually eliminated excessively long completion times.
Under robust power control, when only channel distribution information is available at the transmitters, it results in longer completion times, and increasingly so with more stringent outage probability constraints.
In Fig.~\ref{fig:outage}, the empirical number of users in outage matches well with the expected number of users in outage ($57q$).

\section{Conclusions}
\label{sec:conclu}

In this paper, we consider minimizing a convex function of the completion times of user packets by optimally allocating transmission power in a wireless network.
We first focus on the scenario where the channel gains can be estimated accurately and are known by all users.
We show that completion time minimization can be formulated as a convex optimization problem, and hence the corresponding optimal power allocation can be efficiently computed.
The optimization formulation is valid for all ranges of SINR, which is especially pertinent for wireless sensor networks or delay-insensitive data applications where the users may transmit at moderate or low SINRs.
Under fading channels with transmission power adaptation across fading states, an average power constraint can be incorporated into the optimization problem.
We show that completion time minimization is a special case of rate utility maximization for which the optimization problem can be posed as convex.
In particular, in a wireless network, although the feasible rate region is non-convex, the corresponding completion time region is shown to be convex.
Furthermore, we consider robust power control under imperfect channel knowledge in fading channels.
Completion times are minimized subject to outage probability constraints over the fading distribution,
and we show that for a wide class of commonly used fading distributions, e.g., Rayleigh, Nakagami, and log-normal, robust power control can be posed as a convex optimization problem.
Finally, we apply the completion time minimization and robust power control frameworks in the setting of a wireless cellular network,
and show that optimizing the transmission power can significantly reduce the average completion time as compared to full power transmission.

\appendix

\subsection{Proof of Theorem~\ref{thm:rlb_fn_log_ccv}}
\label{sec:prf_thm_rlb_fn_log_ccv}

We first introduce, in terms of the transformed variables in (\ref{eq:t_W_ij_ln_W_ij}), the notation of
$\tilde{\mathbf{W}}_{-i} \in \mathds{R}^{M-1}$ representing the interfering channel random vector
\begin{align}
\tilde{\mathbf{W}}_{-i} &\triangleq \begin{bmatrix}\tilde{W}_{i1} \;\dots \;\tilde{W}_{i\,i-1} \;\tilde{W}_{i\,i+1} \;\dots \;\tilde{W}_{iM}\end{bmatrix}^T
\end{align}
and $\tilde{\mathbf{w}_{-i}}$ is a realization of $\tilde{\mathbf{W}}_{-i}$.
The PDF of $\tilde{\mathbf{W}}_{-i}$ is given by the marginal
\begin{align}
\label{eq:f_tW_mi_int_f_tW_i}
f_{\tilde{\mathbf{W}}_{-i}}(\tilde{\mathbf{w}}_{-i}) =
\int f_{\tilde{\mathbf{W}}_i}(\tilde{\mathbf{w}}_i) \;d\tilde{w}_{ii}.
\end{align}
Next, conditioning on $\tilde{\mathbf{W}}_{-i} = \tilde{\mathbf{w}}_{-i}$, the reliability function in (\ref{eq:Phi_i_Pr_W_S_P}) is given by
\begin{align}
\label{eq:Phi_i_int_phi_i_w}
\Phi_i(\tilde{S}_i, \tilde{\mathbf{P}}) =
\int \phi_i(\tilde{S}_i, \tilde{\mathbf{P}},\tilde{\mathbf{w}}_{-i}) \;\mathbf{d}\tilde{\mathbf{w}}_{-i}
\end{align}
where $\phi_i(\tilde{S}_i, \tilde{\mathbf{P}},\tilde{\mathbf{w}}_{-i})$ is defined as the composition of
\begin{align}
\label{eq:phi_i_bF_ftW}
\phi_i(\tilde{S}_i, \tilde{\mathbf{P}},\tilde{\mathbf{w}}_{-i}) &\triangleq
\bar{F}_{\tilde{W}_{ii}}\bigl(g_i(\tilde{S}_i, \tilde{\mathbf{P}}, \tilde{\mathbf{w}}_{-i})\bigr)\,
f_{\tilde{\mathbf{W}}_{-i}}(\tilde{\mathbf{w}}_{-i})\\
\label{eq:b_F_W_ii_1F}
\bar{F}_{\tilde{W}_{ii}}(\tilde{w}_{ii}) &\triangleq 1 - F_{\tilde{W}_{ii}}(\tilde{w}_{ii})\\
\label{eq:gi_Si_tPi_twmi}
g_i(\tilde{S}_i, \tilde{\mathbf{P}}, \tilde{\mathbf{w}}_{-i}) & \triangleq
\ln \biggl\{ N_i\exp(\tilde{S}_i-\tilde{P}_i)
+ \sum_{j\neq i} \exp(\tilde{S}_i-\tilde{P}_i+\tilde{P}_j+\tilde{w}_{ij})\biggr\}.
\end{align}
In (\ref{eq:b_F_W_ii_1F}), $F_{\tilde{W}_{ii}}(\tilde{w}_{ii})$ is the CDF of $\tilde{W}_{ii}$,
and $\bar{F}_{\tilde{W}_{ii}}(\tilde{w}_{ii})$ is referred to as its complementary CDF, which is a nonincreasing function in $\tilde{w}_{ii}$.

\begin{IEEEproof}
In the construction of $\Phi_i(\tilde{S}_i, \tilde{\mathbf{P}})$ in (\ref{eq:Phi_i_int_phi_i_w}),
with the application of Lemma~\ref{lem:f_tWi_log_ccv} below,
log-concavity is preserved \cite{prekopa95:stoc_prog, boyd04:convex_opt} under the integration in (\ref{eq:Phi_i_int_phi_i_w}), (\ref{eq:f_tW_mi_int_f_tW_i}); multiplication in (\ref{eq:phi_i_bF_ftW}); complementary CDF in (\ref{eq:b_F_W_ii_1F}); and composition of a logarithmically concave, nonincreasing function with a convex function in (\ref{eq:gi_Si_tPi_twmi}).
\end{IEEEproof}
\begin{lemma}
\label{lem:f_tWi_log_ccv}
Under transformation (\ref{eq:t_W_ij_ln_W_ij}), $f_{\tilde{\mathbf{W}}_i}(\tilde{\mathbf{w}}_i)$ is log-concave in $\tilde{\mathbf{w}}_i$.
\end{lemma}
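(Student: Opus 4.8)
The plan is to obtain $f_{\mathbf{\tilde{W}}_i}$ from $f_{\mathbf{W}_i}$ via the change-of-variables formula for the component-wise exponential map, and then read off log-concavity directly from the hypothesis of Theorem~\ref{thm:rlb_fn_log_ccv}. First I would observe that $\mathbf{\tilde{w}}_i \mapsto \exp(\mathbf{\tilde{w}}_i)$ is a smooth bijection from $\mathbb{R}^M$ onto the support $\mathbb{R}_{++}^M$ of $\mathbf{W}_i$, with diagonal Jacobian whose determinant is $\prod_{j=1}^M e^{\tilde{w}_{ij}} = \exp(\mathbf{1}^T\mathbf{\tilde{w}}_i)$. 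Hence $f_{\mathbf{\tilde{W}}_i}(\mathbf{\tilde{w}}_i) = f_{\mathbf{W}_i}\bigl(\exp(\mathbf{\tilde{w}}_i)\bigr)\,\exp(\mathbf{1}^T\mathbf{\tilde{w}}_i)$.

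The remaining step is immediate: taking logarithms gives $\ln f_{\mathbf{\tilde{W}}_i}(\mathbf{\tilde{w}}_i) = \ln f_{\mathbf{W}_i}\bigl(\exp(\mathbf{\tilde{w}}_i)\bigr) + \mathbf{1}^T\mathbf{\tilde{w}}_i$. By the hypothesis of Theorem~\ref{thm:rlb_fn_log_ccv}, the first summand is concave in $\mathbf{\tilde{w}}_i$; the second is affine; concavity is preserved under addition; so $\ln f_{\mathbf{\tilde{W}}_i}$ is concave, which is exactly the claimed log-concavity of $f_{\mathbf{\tilde{W}}_i}$.

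I do not expect a genuine obstacle here — the lemma is essentially a restatement of the theorem's hypothesis modulo the log-linear Jacobian factor $\exp(\mathbf{1}^T\mathbf{\tilde{w}}_i)$. The only point that deserves a word of care, and it is routine, is the bookkeeping for distributions that vanish on part of their support: I would work with the extended-real-valued $\ln f$, note that adding the finite affine term $\mathbf{1}^T\mathbf{\tilde{w}}_i$ preserves concavity even where $\ln f_{\mathbf{W}_i}\bigl(\exp(\cdot)\bigr) = -\infty$, and note that since $\mathbb{R}_{++}^M$ is the image of all of $\mathbb{R}^M$ under component-wise $\exp$, the transformation introduces no spurious domain restriction. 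As a remark, the hypothesis can then be verified directly for independent Rayleigh, Nakagami, or log-normal components, which is how the examples listed after Theorem~\ref{thm:rlb_fn_log_ccv} are justified.
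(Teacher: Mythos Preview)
Your proposal is correct and matches the paper's own proof essentially line for line: both apply the change-of-variables formula to obtain $\ln f_{\mathbf{\tilde{W}}_i}(\mathbf{\tilde{w}}_i) = \mathbf{1}^T\mathbf{\tilde{w}}_i + \ln f_{\mathbf{W}_i}\bigl(\exp(\mathbf{\tilde{w}}_i)\bigr)$ and then invoke the hypothesis of Theorem~\ref{thm:rlb_fn_log_ccv} for the second summand and the affinity of the first. Your additional remarks on extended-real-valued $\ln f$ and the domain are careful but not needed beyond what the paper does.
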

\begin{IEEEproof}
Consider the logarithm of the PDF of transformed random vector $\tilde{\mathbf{W}}_i$
\begin{align}
\ln f_{\tilde{\mathbf{W}}_i}(\tilde{\mathbf{w}}_i) &=
\ln \biggl\{
\exp\Bigl(\sum_{j=1}^M \tilde{w}_{ij}\Bigr)
f_{\mathbf{W}_i}\bigl(\exp(\tilde{\mathbf{w}}_i)\bigr)
\biggr\}\\
&= \sum_{j=1}^M \tilde{w}_{ij} + \ln f_{\mathbf{W}_i}\bigl(\exp(\tilde{\mathbf{w}}_i)\bigr)
\end{align}
where log-concavity of $f_{\mathbf{W}_i}\bigl(\exp(\tilde{\mathbf{w}}_i)\bigr)$ in $\tilde{\mathbf{w}}_i$ follows from the condition given in Theorem~\ref{thm:rlb_fn_log_ccv}.
\end{IEEEproof}

\subsection{Proof of Proposition~\ref{prop:ray_naka_lognorm}}
\label{sec:prf_prop_ray_naka_lognorm}

When the channels exhibit independent fading, the joint distribution is given by the product of the marginal fading distributions.
Since log-concavity is preserved under multiplication \cite{prekopa95:stoc_prog, boyd04:convex_opt}, we show that each of the following marginal distribution satisfies the condition given in Theorem~\ref{thm:rlb_fn_log_ccv}.
\begin{IEEEproof}
The Rayleigh fading distribution is
\begin{align}
f_R(w) &= G^{-1} \exp(-w/G),\quad w\geq 0
\end{align}
where $G$ is the average channel power gain. The logarithm of the distribution in exponentiated variable is
\begin{align}
\ln f_R\bigl(\exp(\tilde{w})\bigr) &= -\ln G - \exp(\tilde{w})/G
\end{align}
which is a concave function of $\tilde{w}$.
Similarly, the Nakagami and log-normal fading distributions are respectively given by
\begin{align}
f_N(w) &= \frac{(m/G)^m w^{m-1} e^{-mw/G}}{\Gamma(m)}, \quad w\geq 0\\
f_L(w) &= \frac{10/\ln 10}{\sqrt{2\pi}\sigma w}\exp\biggl(-\frac{(10\log_{10}w-\mu)^2}{2\sigma^2}\biggr), \quad w\geq 0
\end{align}
where $\Gamma(\cdot)$ is the gamma function, and $m\geq1/2$,  $\mu,\sigma>0$ are the parameters of the fading distributions.
The logarithm of the distributions in exponentiated variables are
\begin{align}
\ln f_N\bigl(\exp(\tilde{w})\bigr) &= -\ln\Gamma(m) +m\ln(m/G) +(m-1)\tilde{w} -m \exp(\tilde{w})/G\\
\ln f_L\bigl(\exp(\tilde{w})\bigr) &=  \ln\biggl(\frac{10/\ln 10}{\sqrt{2\pi}\sigma}\biggr)
- \tilde{w} - \frac{(10\tilde{w}/\ln 10 - \mu)^2}{2\sigma^2}
\end{align}
which are concave functions.
\end{IEEEproof}

\section*{Acknowledgment}
The authors would like to thank Sivarama Venkatesan for providing the hexagonal cellular network channel modeling software.


\bibliographystyle{IEEEtran}
\bibliography{IEEEabrv,wrlscomm}


\end{document}